\newcommand{\ba}{\begin{align}}
\newcommand{\ea}{\end{align}}
\newcommand{\lam}{\lambda}
\newcommand\ignore[1]{{}}
\theoremstyle{definition}
\newcommand\Algphase[1]{%
\vspace*{-.7\baselineskip}\Statex\hspace*{\dimexpr-\algorithmicindent-2pt\relax}\rule{\columnwidth}{0.4pt}%
\Statex\hspace*{-\algorithmicindent}\textbf{#1}%
\vspace*{-.7\baselineskip}\Statex\hspace*{\dimexpr-\algorithmicindent-2pt\relax}\rule{\columnwidth}{0.4pt}%
}
\definecolor{RoyalBlue}{rgb}{0.9,1,1}
\acrodef{esc}[ESC]{extremum seeking control}
\acrodef{lmis}[LMIs]{linear matrix inequalities}
\title{\sc Safe Learning-based Observers for Unknown Nonlinear Systems using Bayesian Optimization}
\author{Ankush Chakrabarty$^{\dag}$ and Mouhacine Benosman%
\thanks{$^1$All authors are affiliated with Mitsubishi Electric Research Laboratories, Cambridge, MA, USA.}%
\thanks{$^\dag$Corresponding author: A.~Chakrabarty. Phone: +1~(617)~758-6175. Email: \texttt{achakrabarty@ieee.org}}}
\newcommand{\bestrewardk}{\hat\costfn_j^\star}
\newcommand{\costfn}{\mathcal J}
\newcommand{\lc}{\mathfrak{L}_\phi}
\newcommand{\lchat}{\hat{\mathfrak L}_\psi}
\newcommand{\lpsi}{\mathfrak{L}_\psi}
\date{\today}
\begin{document}
	
	\maketitle
    \begin{abstract}
	{\sf Data generated from dynamical systems with unknown dynamics enable the learning of state observers that are: robust to modeling error, computationally tractable to design, and capable of operating with guaranteed performance. In this paper, a modular design methodology is formulated, that consists of three design phases: (i) an initial robust observer design that enables one to learn the dynamics without allowing the state estimation error to diverge (hence, safe); (ii) a learning phase wherein the unmodeled components are estimated using Bayesian optimization and Gaussian processes; and, (iii) a re-design phase that leverages the learned dynamics to improve convergence rate of the state estimation error. The potential of our proposed learning-based observer is demonstrated on a benchmark nonlinear system. Additionally, certificates of guaranteed estimation performance are provided.}
	\end{abstract}

	\section{Introduction}
	With modern dynamical systems growing in complexity, data-driven methods of state estimation have gained relevance and provide a suitable solution for problems with incomplete model knowledge, such as systems with unknown or unmodeled dynamics. Leveraging data generated from these systems and employing function approximators (such as neural networks) allows one to design controllers and estimators despite incomplete model descriptions, by identifying the unknown dynamics. While the most common applications of learning via function approximators are in identifying dynamical systems from measurement data~\cite{narendra1990identification,kosmatopoulos1995high}, approximating control laws using offline data~\cite{chakrabarty2017support,drgovna2018approximate}, or adapting with data to generate optimal control policies online~\cite{kiumarsi2018optimal,huang2014neural}, the utility of data and approximators for state estimation in nonlinear systems with unmodeled dynamics remains relatively unexplored.

Some early investigations into data-driven observers, for example, in~\cite{theocharis1994neural} assume model availability. However, the current wave of data-driven control has demonstrated the effectiveness of approximators in controlling systems with partial models or, in some cases, a `model-free' manner. Data-driven observers in the model-free setting were explored almost two decades ago in~\cite{Kim1997}, where the authors proposed an adaptation rule for learning the weights of a linear-in-parameter neural network (LPNN) that results in uniformly ultimately bounded estimation error dynamics. Although this work has been adopted in multiple applications such as robot control~\cite{Abdollahi2006,Liu2018}, rotors~\cite{Pratap2010}, and more recently, wind turbines~\cite{Rahimilarki2019}, the inherent assumptions and theory have hardly evolved. In most of these methodologies, the activation functions are considered to be radial basis functions, there is no measurement noise, and the theoretical guarantees of learning performance remain the same; an exception is~\cite{Rahimilarki2019} where the authors investigate input-to-state stability (ISS) observers for the known component of the model, but the learner performance is not ISS, and the learner's weights require manual tuning.

In~\cite{benosman2019robust}, the authors studied the problem of learning-based observer design for a class of partial differential equations (PDEs). The idea is based on designing a robust controller with respect to the structured model uncertainties, complemented with a learning layer, to auto-tune the observer gains, using a data-driven extremum seeking algorithm. However, the model uncertainties were compensated for by a robust estimation approach, and no online estimation of the model was proposed in this work.
In~\cite{koga2019learning}, the authors proposed a learning-based observer design for the particular case of the Boussinesq PDE equations with parametric uncertainties, for a thermo-fluid application. The main idea is to construct a nominal observer which ensures ISS between the state estimation error and the parameter estimation error. The parameters are then estimated online using an extremum seeking method. The result was limited to the specific case of Boussinesq equations with structured parametric uncertainties.

In \cite{chakrabarty2019cdc}, the problem of adaptive observer design for a class of nonlinear systems was studied. The authors proposed to use shallow neural nets, with Lipschitz activation functions, to estimate the unknown parts of the model. The use of Lipschitz constants of these activation functions simplifies the data-driven observer design procedure via new LMI conditions, ensuring pre-computable robust stability guarantees on the resulting state estimation error. The learning results, however, are model-based in the sense that the adaptation laws designed to learn the neural net coefficients are based on the model structure, similar to an indirect model-based adaptation approach, which limits the validity of the learning laws to a specific class of models, see for example~\cite{ben_survey18} for a discussion about model-based and model-free learning and adaptation. 

In~\cite{na2017adaptive}, the authors propose to formulate the problem of observer design for a class of partially known linear and nonlinear systems, as an optimal problem with quadratic cost over an infinite time support. Then the authors propose to use adaptive dynamic programming to solve this equivalent optimal control problem. However, this leads to a bounded integral cost of the estimation error, rather than, the usual sought for convergence result to a zero estimation error. This paper also requires a specific structure on the output matrix that is not always available.

In~\cite{capone2019interval}, the authors propose, in the context of interval observers, a Gaussian process-based observer for a class of partially known nonlinear autonomous systems. The Gaussian process is used to approximate the unknown part of the model, which assumed to be a Lipschitz continuous function. 

\subsubsection*{Contributions}
In this paper, we extend the ideas from \cite{chakrabarty2019cdc,chakrabarty2019estimating,koga2019learning} to the general case of partially known nonlinear models, where we propose to approximate the unknown part using neural networks, under a robustness design constraint in the form of local input-to-state stability between the state estimation error and the model parametric estimation error. This approach allows us to obtain a \textit{safe} learning of the model, in the sense of maintaining boundedness of the state estimation error at all time, even during learning. After convergence of the learning algorithm, we propose to add a redesign step, which takes into account the learned model to improve the observer performance. 

\subsubsection*{Organization}
The rest of the paper is organized as follows. Section~\ref{sec:ps} describes the overall problem we are solving in this paper, along with assumptions. Section~\ref{sec:obs_nominal} introduces a method for generating initial robust observer gains, and a method for learning the unmodeled dynamics via Bayesian optimization is discussed in Section~\ref{sec:bayesopt}. An optional re-design of the observer gain is presented in Section~\ref{sec:redesign} for improving the convergence rate by exploiting the learned nonlinearity, and the potential of our proposed approach is demonstrated using a numerical example in Section~\ref{sec:ex}. Conclusions are drawn in Section~\ref{sec:conc}.
\subsubsection*{Notation}
We denote by $\mathbb{R}$ the set of real numbers, $\mathbb R_+$ as the set of positive reals, and $\mathbb{N}$ as the set of natural numbers. In a metric space $(\mathbb X, \rho)$, we denote an open ball centered at $x_0\in\mathbb X$ with radius $\varepsilon$ as $\mathbb B_\varepsilon(x_0) := \{x\in\mathbb X \,|\, \rho(x, x_0) < \varepsilon\}$. The symbols $\mathsf E(\cdot)$ and $\mathsf V(\cdot)$ denote the expectation and variance operators of a random variable, respectively. The notation $\ln$ represents the natural logarithm. A function $f(x)\in \mathcal C^n$ if it is $n$-times differentiable and all its derivatives are continuous, and $f$ is $L$-Lipschitz continuous (or $L$-Lipschitz, for brevity) if $\|f(x)-f(y)\|\le L\|x-y\|$ for some $L>0$ and any $(x,y)$ pair in its domain. A function $f\in\mathcal L_\infty$ if $\|f(\cdot)\|<\infty$ on its domain. A continuous function $f:[0,\infty)\to [0,\infty)$ belongs to class $\mathcal K$ if it is strictly increasing and $f(0)=0$ and of class $\mathcal K_\infty$ if it is of class $\mathcal K$ and unbounded (with range extended to $[0,\infty]$). A continuous function $f$ belongs to class $\mathcal{KL}$ if, for each fixed $t$, the mapping $f(t, x)$ belongs to class $\mathcal K$ with respect to $t$ and, for each fixed $x$, the mapping $f(t,x)$ is decreasing with respect to $t$ and $f(t,x)\to 0$ as $t\to\infty$. For a square matrix $M$, its largest and smallest eigenvalue are denoted by $\lam_{\max}(M)$ and $\lam_{\min}(M)$, respectively. The transpose of a real-valued matrix $M$ is denoted $M^\top$, and its Frobenius norm is denoted $\|M\|$. The quadratic form is written briefly as $\|x\|_Q^2:= x^\top Q x$.

	\section{Problem Statement}\label{sec:ps}
	\subsection{Preliminaries}
	We consider nonlinear systems modeled by
	\begin{subequations}
		\label{eq:system_description}
		\begin{align}
			x_{t+1} &= Ax_t + Bu_t + \phi(q_t),\\
			y_t &= Cx_t,\\
			q_t &= C_q x_t,
		\end{align}
		where $t\in\mathbb R$ is the time index, $x\in\mathbb R^{n_x}$ denotes the system state, $u\in\mathbb R^{n_u}$ the known control input, $y\in\mathbb R^{n_y}$ the measured output, and $q_t\in\mathbb Q\subset \mathbb R^{n_q}$ the argument of the nonlinearity $\phi$. The pair $(A,B)$ is stabilizable, the pair $(A,C)$ is observable, and the set of arguments $\mathbb Q$ is compact. Our knowledge of the system is summarized in the following assumption.
		\begin{assumption}\label{asmp:knowledge_dynamics}
			The system matrices $A$, $B$, and $C$ are known. The nonlinearity's argument matrix $C_q$ is known. The nonlinearity $\phi$ is completely unknown.
		\end{assumption}

		We also add a boundedness and Lipschitz assumption on the nonlinearity.
\begin{assumption}\label{assumption_phi_bound}
The unknown nonlinearity is bounded: there exists a scalar $\overline{\phi}>0$ such that $\phi(q)\leq \overline{\phi}$ for all $q\in \mathbb Q$. Furthermore, $\phi$ is Lipschitz continuous on $\mathbb Q$.
\end{assumption}
\begin{assumption}\label{asmp:x_bdd}
The discrete Cauchy problem (\ref{eq:system_description}) admits a bounded\footnote{as defined in \cite{RD03}, Definition 1.1} solution for any initial condition $x_0\in\mathbb R^{n_x}$.
\end{assumption}
Since every Lipschitz continuous nonlinearity is \textit{a fortiori} continuous, and the domain of $\phi$ (namely, $\mathbb Q$) is compact by assumption, one can use a wide range of function approximators to express $\phi$ as a basis expansion. These approximators need only induce functions that are dense in the space of continuous functions such as polynomials (recall the Stone-Weierstrass Theorem) or `universal approximators' like shallow/deep neural networks with appropriate activation functions~\cite{hanin2019universal}. Therefore, we can rewrite the nonlinearity $\phi$ in the basis expansion form
\begin{equation}\label{q_func_psi}
\phi(q) = p_\star^\top \psi(q) + \epsilon_{\phi}(q),
\end{equation}
\end{subequations}
where $p_\star$ denotes a $N_b \times n_x$ matrix of correct coefficients of the basis expansion, $\psi$ denotes a $N_b$-dimensional vector of $\mathcal C^1$ basis/activation functions, and $\epsilon_{\phi}$ denotes a bounded approximation error; that is, $\|\epsilon_{\phi}(q)\|\leq \bar\epsilon_\phi$ for any $q\in\mathbb Q$. This basis function form is also referred to in the literature as a neural approximator~\cite{na2017adaptive}. We note that the $\mathcal C^1$ nature of $\psi$ implies that it is Lipschitz continuous on $\mathbb Q$; we denote it's Lipschitz constant as $\lpsi$. Note that by equation (\ref{q_func_psi}), together with Assumption \ref{assumption_phi_bound}, and the boundedness of $p_\star$ leads to the boundedness of $\psi$ in $\mathbb Q$.
\begin{assumption}\label{asmp:p_bdd}
There exists a known positive scalar $\bar p_\star$ such that $\|p^\star\|\le \bar p_\star$.
    \end{assumption}
    Assumption~\ref{asmp:p_bdd} is admittedly strong, but a good estimate of $\bar p_\star$ is not needed, as we shall see in the remainder of the paper. In fact, this is an assumption to ensure good robustness properties of the initial observer. As data is obtained, the learning will overcome the conservativeness of the assumption and $\bar p_\star$ will be computed (not guessed) for the redesign.
    
    This motivates a state observer of the form
    \begin{subequations}\label{eq:obs}
    \begin{align}
    \hat x_{t+1} &= A\hat x_t + B u_t + p_t^\top \psi(\hat q_t) + L_0(C\hat x_t - y_t),\\
    \hat q_t &= C_q \hat x_t,
    \end{align}
    \end{subequations}
    with gain matrix $L_0\in\mathbb R^{n_x \times n_y}$. The error dynamics of the observer~\eqref{eq:obs} with $e_t:= \hat x_t - x_t$ and $e^p_t=p_t-p_\star$ is given by
    \begin{align}\label{eq:ester1}
    e_{t+1} = A_L e_t + \psi(\hat q_t)^\top e_t^p + p_\star^\top\Delta \psi_t - \epsilon_\phi(q_t),
    \end{align}
where $A_L = A+L_0 C$ and $\Delta\psi_t=\psi(\hat{q}_t)-\psi(q_t)$.
    Our initial step is to design a learning algorithm and the observer gain $L_0$ 
    such that
    \[
    {e_t \to \mathbb B_{\varepsilon}(0) \;\text{and}\; p_t \to p_\star \;\text{as}\; t\to\infty,}
    \]
	and then switch to a more precise observer, once we have learned the model uncertainties. Indeed, our data-driven observer will operate in three design modes: (i) in the data collection mode, we will design $L_0$ such that the error dynamics are locally input-state stable (L-ISS) between the parameter estimation error $p_t^\top  - p_\star $ and the state estimation error $e_t$, setting the stage for safe learning; (ii) in the safe learning mode, we will leverage Bayesian optimization to learn the coefficients $p_t$; and, (iii) as the learning  terminates, we will redesign $L_0$ taking the learned nonlinearity into account to improve the estimation performance of the observer, based on the learned model.  
	
	\section{Nominal ISS observer design}
	\label{sec:obs_nominal}
	The main idea of the nominal observer is to design the gain $L_0$ such that one can guarantee a local ISS property between the parameter estimation error and the state estimation error. In this section, we present conditions that, if satisfied, enable the design of a suitable $L_0$. Consequently, we propose a simple convex programming formulation for computing an $L_0$ that satisfies these conditions.
	\subsection{Conditions for initial design}
    We begin with the following definitions from~\cite{Li2016}.
    \begin{definition}[L-ISS]\label{def:lISS}
    A discrete-time nonlinear system
    \begin{equation}\label{eq:exemplar_sys}
    \xi_{t+1} = \varphi(\xi_t, \nu_t)
    \end{equation}
    is locally input-to-state stable (L-ISS) with respect to the exogenous input $\nu_t$, if there exist scalars $\varrho_\xi > 0$, $\varrho_\nu > 0$, $\gamma\in\mathcal K_\infty$ and a function $\beta(\cdot,\cdot)\in \mathcal{KL}$ such that for all $\|x_0\| \le \varrho_\xi$ and $\|\nu\|_\infty\le \varrho_{\nu}$, the condition
    \[
    \|\xi(t, \xi_0,\nu)\| \le \beta(t,\|\xi_0\|) + \gamma(\|\nu\|_\infty)
    \]
    is satisfied, for all $t\in\mathbb N$.
    \end{definition}
    
    \begin{definition}[L-ISS Lyapunov function]\label{def:localISS_lyap}
    Let the set $\mathcal G \subset \mathbb R^n$ contain the origin in its interior. A Lipschitz continuous function $V : \mathcal G \to [0,\infty)$ is called a local ISS Lyapunov function for a system~\eqref{eq:exemplar_sys}
    on $\mathcal G$ if there exist $\mathcal K_\infty$ functions $\gamma_1(\cdot)$, $\gamma_2(\cdot)$, $\gamma(\cdot)$ and $\beta(\cdot)$ such that
    \begin{subequations}
    \begin{align}\label{eq:lyapISS_1}
    \gamma_1(\|\xi\|) \le V(\xi) &\le \gamma_2(\|\xi\|)\\\label{eq:lyapISS_2}
    V(\varphi(\xi, \nu)) - V(\xi) &\le      -\gamma(\|\xi\|)+ \beta(\|\nu\|)
    \end{align}
    \end{subequations}
    is satisfied, for all $\xi\in\mathcal G$ and $\nu\in \Xi$.
    \end{definition}
    
    \begin{lemma}\label{lem:localISS}
    If the system~\eqref{eq:exemplar_sys} admits a locally ISS Lyapunov function, then it is locally ISS w.r.t. $\nu$.
    \end{lemma}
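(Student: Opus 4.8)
The plan is to run the standard discrete-time Lyapunov-to-ISS argument (see, e.g., \cite{Li2016,jiang2001input}), being careful to carry the word \emph{local} through every step. First I would fix a constant $c>0$ small enough that the sublevel set $\Omega_c := \{\xi\in\mathcal G : V(\xi)\le c\}$ is contained in $\mathcal G$; such a $c$ exists because $\mathcal G$ is a neighbourhood of the origin and $V$ is continuous with $V(0)=0$ (which follows from $V(0)\le\gamma_2(\|0\|)=0$). Every estimate below is asserted only along trajectory segments that remain inside $\Omega_c$, and the choice of $\varrho_\xi,\varrho_\nu$ at the very end will guarantee that they do.

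Next I would convert the decrease inequality~\eqref{eq:lyapISS_2} into a recursion in $V$ alone. From the left half of~\eqref{eq:lyapISS_1} we have $\|\xi\|\ge\gamma_2^{-1}(V(\xi))$, so $-\gamma(\|\xi\|)\le-\alpha(V(\xi))$ with $\alpha:=\gamma\circ\gamma_2^{-1}\in\mathcal K_\infty$. Replacing $\alpha(s)$ by $\tilde\alpha(s):=\min\{\alpha(s),s\}$ (still in $\mathcal K_\infty$) keeps $V(\xi_t)-\tilde\alpha(V(\xi_t))\ge0$, and~\eqref{eq:lyapISS_2} yields
\[
V(\xi_{t+1})\le V(\xi_t)-\tilde\alpha(V(\xi_t))+\beta(\|\nu\|_\infty)
\]
for every $t$ with $\xi_t\in\Omega_c$. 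A case split on whether $\tilde\alpha(V(\xi_t))\ge 2\beta(\|\nu\|_\infty)$ then gives the familiar dichotomy: either $V$ strictly decreases, $V(\xi_{t+1})\le V(\xi_t)-\tfrac12\tilde\alpha(V(\xi_t))$, or $V(\xi_t)$ already lies in a residual sublevel set $\{V\le\kappa(\|\nu\|_\infty)\}$ for a suitable $\kappa\in\mathcal K$, which one checks is forward invariant (shrinking $c$ and $\varrho_\nu$ so that $\kappa(\varrho_\nu)\le c$). Invoking a standard discrete-time comparison lemma for the scalar recursion $w_{t+1}=w_t-\tfrac12\tilde\alpha(w_t)$, whose solutions are dominated by some $\hat\beta(w_0,t)\in\mathcal{KL}$, and combining the two regimes (the first covering the finite time until the trajectory reaches the residual set, the second afterwards) yields
\[
V(\xi_t)\le\hat\beta(V(\xi_0),t)+\kappa(\|\nu\|_\infty)\qquad\text{for all }t\in\mathbb N,
\]
valid as long as the trajectory remains in $\Omega_c$.

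Finally I would sandwich this back to the state norm and discharge the locality obligations. Using the right half of~\eqref{eq:lyapISS_1}, $\|\xi_t\|\le\gamma_1^{-1}(V(\xi_t))\le\gamma_1^{-1}\!\big(2\hat\beta(\gamma_2(\|\xi_0\|),t)\big)+\gamma_1^{-1}\!\big(2\kappa(\|\nu\|_\infty)\big)$, which is precisely the L-ISS estimate of Definition~\ref{def:lISS} with $\beta(t,r):=\gamma_1^{-1}(2\hat\beta(\gamma_2(r),t))\in\mathcal{KL}$ and $\gamma(r):=\gamma_1^{-1}(2\kappa(r))\in\mathcal K_\infty$. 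To legitimise the standing ``remains in $\Omega_c$'' assumption, pick $\varrho_\xi>0$ with $\gamma_2(\varrho_\xi)\le c/2$ and $\varrho_\nu>0$ with $\kappa(\varrho_\nu)\le c/2$; an induction on $t$ using the displayed bound for $V(\xi_t)$ then shows $V(\xi_t)\le c$ for all $t$, so the trajectory never escapes $\Omega_c\subset\mathcal G$ and all intermediate estimates are justified.

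The step I expect to be most delicate is exactly this locality bookkeeping: choosing a single $c$ (hence a single pair $\varrho_\xi,\varrho_\nu$) that simultaneously keeps $\Omega_c\subseteq\mathcal G$, makes the residual set $\{V\le\kappa(\|\nu\|_\infty)\}$ sit inside $\Omega_c$, keeps the $\mathcal K_\infty$ inverses $\gamma_1^{-1},\gamma_2^{-1},\tilde\alpha^{-1}$ applied only on the bounded ranges where they are defined, and closes the forward-invariance induction. The only other non-routine ingredient is the discrete-time comparison lemma producing the $\mathcal{KL}$ bound from $w_{t+1}=w_t-\tfrac12\tilde\alpha(w_t)$; this is classical (it is where $\tilde\alpha\in\mathcal K_\infty$ with $\tilde\alpha(s)\le s$ is used) and can be cited from \cite{Li2016} or reproduced by the usual construction.
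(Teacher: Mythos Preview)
Your argument is correct and follows the standard discrete-time Lyapunov-to-ISS conversion; by contrast, the paper does not actually prove this lemma but simply cites \cite[Theorem~4]{jiang2001input}. So you are supplying precisely the argument the paper outsources, and the two are consistent rather than genuinely different routes.

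Two minor points worth cleaning up before you write it out in full. First, you have the ``left half'' and ``right half'' of~\eqref{eq:lyapISS_1} swapped: the bound $\|\xi\|\ge\gamma_2^{-1}(V(\xi))$ comes from the \emph{right} inequality $V(\xi)\le\gamma_2(\|\xi\|)$, and the bound $\|\xi_t\|\le\gamma_1^{-1}(V(\xi_t))$ comes from the \emph{left} inequality $\gamma_1(\|\xi\|)\le V(\xi)$. The inequalities themselves are correct; only the labels are reversed. Second, in the forward-invariance induction you should also verify the one-step overshoot: when $\xi_t\in\Omega_c$ but $\tilde\alpha(V(\xi_t))<2\beta(\|\nu\|_\infty)$, the bound $V(\xi_{t+1})\le V(\xi_t)+\beta(\|\nu\|_\infty)$ could in principle exceed $c$. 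Your choice $\kappa(\varrho_\nu)\le c/2$ handles the residual set, but to close the induction cleanly you may want to shrink $\varrho_\nu$ a bit further (or carry an extra margin in $c$) so that a single application of~\eqref{eq:lyapISS_2} from anywhere in $\Omega_c$ cannot escape $\mathcal G$. This is exactly the ``locality bookkeeping'' you already flagged as the delicate part, so you are aware of it; just make sure the quantifiers line up when you write the final version.
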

    \begin{proof}
    See~\cite[Theorem 4]{jiang2001input} .
    \end{proof}
	
	The following theorem provides conditions for designing $L_0$ such that $(A+L_0C)$ is Schur (stable), and the state estimation error $e$ is L-ISS with respect to the parameter estimation error $e^p$ and the neural approximation error $\epsilon_\phi$.
	\begin{theorem} \label{thm1} 
	Recall $A_L = A+L_0 C$. Let $\lpsi:=\max_{\mathbb Q} \|\nabla_q \hat\psi\|$ and Assumptions 1--\ref{asmp:p_bdd} hold.  
    Suppose there exist matrices $P=P^\top \succ 0$, $Q=Q^\top \succ 0$, and an initial observer gain $L_0$ such that
	\begin{subequations}\label{eq:thm1_all}
	\begin{align} 	\label{eq:lyap_eqn}
	A_L^\top P A_L - P + Q &= 0,\\
	\label{eq:gain_cond} 
4\lam_{\max}(P)\bar p_\star^2 \lpsi^2 \|C_q\|^2 + 8\bar p_\star\lpsi\|PA_L\|\|C_q\| &\leq \lam_{\min}(Q)
	\end{align}
\end{subequations}
then the error system~\eqref{eq:ester1} is L-ISS w.r.t. $\left[{e_t^p}^\top, {\epsilon_\phi}^\top\right]^\top$.
\end{theorem} 
\begin{proof}
We consider a candidate of L-ISS Lyapunov function $V$ defined by
\begin{align}
\label{eq:Lyapdef-G}
V_t = e_t^\top P e_t,
\end{align} 
For quadratic $V$, the condition~\eqref{eq:lyapISS_1} is satisfied with $\gamma_1=\lambda_{\min}(P)\|e\|^2$ and $\gamma_2=\lambda_{\max}(P)\|e\|^2$. Taking the time difference $\Delta V_t:= V_{t+1} - V_t$ of~\eqref{eq:Lyapdef-G} along the solution of \eqref{eq:ester1}, we obtain 
\begin{equation} 
\Delta V_t = \sum_{\ell=1}^6 v_\ell,
\end{equation}
where
\begin{align*}
v_1 &= e_t^\top(A_L^\top P A_L- P)e_t = - e_t^\top Q e_t,\\
v_2 &= (p_\star^\top\Delta\psi_t+\psi(\hat{q})^\top e_t^p)^\top P (p_\star^\top\Delta\psi_t+\psi(\hat{q})^\top e_t^p), \\
v_3 &= \epsilon_{\phi}^\top P \epsilon_{\phi},\\
v_4 &= 2(A_Le_t)^\top P (p_\star^\top\Delta\psi_t+\psi(\hat{q})^\top e_t^p),\\
v_5 &= -2 \epsilon_\phi^\top P A_L e_t,\\
v_6 &= -2\epsilon_\phi^\top P (p_\star^\top\Delta\psi_t+\psi(\hat{q})^\top e_t^p).
\end{align*}
where the rightmost equation of $v_1$ is obtained by replacing with $Q$ using~\eqref{eq:lyap_eqn}.

Before further analysis, note that we can bound the norms of $p_\star^\top\Delta\psi_t$ and $\psi(\hat{q})^\top e_t^p$ as follows:
\begin{align*}
\|p_\star^\top \Delta \psi_t\| &\le \bar p_\star\lpsi\|q_t - \hat q_t\|\le \bar p_\star\lpsi\|C_q\|\|e_t\|,\\
\|\psi(\hat{q})^\top e_t^p\|&\le \|\psi(\hat{q})\| \|e_t^p\| = \|\bar\psi\|\|e_t^p\|,
\end{align*}
where the first bound is a consequence of the Lipschitz property of $\psi$ and the second due to $\mathbb Q$ being compact and $\psi$ being bounded on $\mathbb Q$. Using these bounds, we can proceed with bounded the $v_\ell$ terms. Concretely,
\begin{align*} 
v_1 &\le -\lam_{\min}(Q)\|e_t\|^2,\\
v_2 &\le \lam_{\max}(P)\left(\bar p_\star^2 \lpsi^2\|C_q\|^2\|e_t\|^2 + \bar\psi^2\|e_t^p\|^2\right),\\
v_3 &\le \lam_{\max}(P)\|\epsilon_\phi\|^2,\\
v_4 &\le 2\bar p_\star\lpsi\|PA_L\|\|C_q\|\|e_t\|^2 + 2\bar\psi\|PA_L\|\|e_t^p\|\|e_t\|,\\
v_5 &\le 2\bar\epsilon_\phi\|PA_L\|\|e_t\|,\\
v_6 &\le 2\bar p_\star\bar\epsilon_\phi \lpsi\|P\|\|C_q\|\|e_t\| + 2\bar\psi\|e_t^p\|.
\end{align*}
Collecting relevant terms, we can write
\begin{equation} 
\Delta V_t \leq -a_0\|e_t\|^2 + a_1\|e_t\|^2 + a_2\|e_t\|+a_3\|e_t^p\|^2 + a_4\|\epsilon_\phi\|^2,
\notag\end{equation} 
where 
\begin{align*}
a_0 &= \tfrac{1}{2}\lambda_{\min}(Q) > 0,\\
a_1 &= -\tfrac{1}{4} \lam_{\min}(Q) + \lam_{\max}(P)\bar p_\star^2 \lpsi^2 \|C_q\|^2 + 2\bar p_\star\lpsi\|PA_L\|\|C_q\|,\\
a_2 &= -\tfrac{1}{4}\lam_{\min}(Q)\|e_t\| + (2\|PA_L\|+2\bar p_\star\lpsi\|P\|\|C_q\|)\bar\epsilon_\phi \\ &\hspace{5em} + 2\bar\psi\|PA_L\|\|e_t^p\|,\\
a_3 &= \lambda_{\max}(P)\bar\psi^2 > 0,\\
a_4 &= \lambda_{\max}(P) >0.
\end{align*}
Since the observer gain $L_0$ satisfies the condition~\eqref{eq:gain_cond}, we get $a_1 \le 0$. Therefore, if $a_2 \le 0$, that is, if 
\begin{align}\notag 
\|e_t^p\|\le \frac{\lam_{\min}(Q)}{8\bar\psi\|PA_L\|}\|e_t\| - \frac{\|PA_L\|+\bar p_\star\lpsi\|P\|\|C_q\|}{\bar\psi\|PA_L\|}\|\epsilon_\phi\|,
\end{align}
then
\begin{equation}\label{perform_quant_0}
\Delta V_t \leq  - a_0 \| e_t\|^2+ a_3\|e_t^p\|^2 + a_4\|\epsilon_\phi\|^2, 
\end{equation}
which satisfies the condition~\eqref{eq:lyapISS_2} with $\nu$ replaced by $\left[{e_t^p}^\top, {\epsilon_\phi}^\top\right]^\top$.
\end{proof} 
\subsection{Computing the initial observer gain}

We begin with the following theorem that provides a design procedure for $L_0$ for a fixed $\lpsi$. Such an $\lpsi$ could be known, or, if unknown, one could design an observer for a maximal $\lpsi$ (obtained via a line search) such that the conditions~\eqref{eq:thm1_all} are satisfied. The following convex relaxation is proposed.

\begin{theorem}
For fixed $\bar p_\star$ and $\lc$, if there exist matrices $P=P^\top$, $Q=Q^\top$, $K\in\mathbb R^{nx\times ny}$, and scalars $\beta_\kappa>0$, $\kappa_1>0$, $\kappa_2>0$, and $\kappa_3>0$ such that
\begin{subequations}
\label{eq:thm2_all}
\begin{align}
&\min_{P, K, Q, \kappa_0, \kappa_1, \kappa_2, \kappa_3} \kappa_0 + \beta_\kappa(\kappa_1 +\kappa_2 + \kappa_3)\label{eq:thm2_a}\\
\text{subject to:} &\notag\\
&\begin{bmatrix}
\label{eq:thm2_b}
-P + Q + \kappa_0 I & \star \\ PA + KC & P
\end{bmatrix} \preceq 0\\
\label{eq:thm2_f}& Q \succeq \kappa_1 I\\
\label{eq:thm2_c}& 0 \prec P \preceq \kappa_2 I\\
\label{eq:thm2_d}& \|PA+KC\| \le \kappa_3\\
\label{eq:thm2_e}& 
4\bar p_\star^2 \lpsi^2 \|C_q\|^2\kappa_2 + 8\bar p_\star\lpsi\|C_q\|\kappa_3 -\kappa_1 \le 0
\end{align}
\end{subequations}
has an optimal solution with $\kappa_0=0$, then an observer of the form~\eqref{eq:obs} with gain $L_0 = P^{-1}K$ yields error dynamics~\eqref{eq:ester1} that are L-ISS.
\end{theorem}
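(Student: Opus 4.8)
The plan is to show that the LMI/scalar conditions \eqref{eq:thm2_all} are nothing but a faithful algebraic encoding of the hypotheses of Theorem~\ref{thm1}, so that a feasible point (with $\kappa_0=0$) yields matrices $P$, $Q$ and a gain $L_0=P^{-1}K$ that satisfy \eqref{eq:lyap_eqn}--\eqref{eq:gain_cond}, at which point Theorem~\ref{thm1} applies verbatim. I would proceed in four steps. First, I would handle the Lyapunov equation. Setting $K=PL_0$ (so $L_0=P^{-1}K$, which is well-defined since $P\succ 0$ by \eqref{eq:thm2_c}), note $PA+KC = P(A+L_0C)$, hence \eqref{eq:thm2_b} with $\kappa_0=0$ reads $\begin{bmatrix} -P+Q & \star \\ P(A+L_0C) & P\end{bmatrix}\preceq 0$. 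Applying the Schur complement with respect to the $P$ block gives $-P+Q + (A+L_0C)^\top P (A+L_0C) \preceq 0$, i.e. $(A+L_0C)^\top P(A+L_0C) - P \preceq -Q$. This is the matrix inequality version of \eqref{eq:lyap_eqn}; I would remark that for the purposes of the L-ISS argument in Theorem~\ref{thm1} the inequality form suffices (one may either relax \eqref{eq:lyap_eqn} to $\preceq$, which does not affect the proof since only $\Delta V_t \le -\lambda_{\min}(Q)\|e_t\|^2 + \cdots$ is used, or set $Q$ to be the exact residual $P - (A+L_0C)^\top P(A+L_0C)$, which is then $\succeq$ the $Q$ appearing in the LMI and only improves \eqref{eq:gain_cond}), and that $(A+L_0C)$ is Schur because it admits the Lyapunov certificate $P\succ 0$ with negative-definite decrement $-Q\prec 0$ (using $Q\succeq\kappa_1 I\succ 0$ from \eqref{eq:thm2_f}).

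Second, I would translate the eigenvalue bounds. From \eqref{eq:thm2_f}, $Q\succeq\kappa_1 I$ gives $\lambda_{\min}(Q)\ge\kappa_1$; from \eqref{eq:thm2_c}, $P\preceq\kappa_2 I$ gives $\lambda_{\max}(P)\le\kappa_2$; and from \eqref{eq:thm2_d}, $\|PA+KC\|_{\sf F} = \|P(A+L_0C)\|_{\sf F}\le\kappa_3$. Third, I would chain these into \eqref{eq:gain_cond}: using the above monotonicities and \eqref{eq:thm2_e},
\[
4\bar p_\star^2\lc^2\lambda_{\max}(P) + 8\bar p_\star\lc\|P(A+L_0C)\|_{\sf F}
\;\le\; 4\bar p_\star^2\lc^2\kappa_2 + 8\bar p_\star\lc\kappa_3
\;\le\; \kappa_1 \;\le\; \lambda_{\min}(Q),
\]
which is exactly \eqref{eq:gain_cond}. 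Thus all hypotheses of Theorem~\ref{thm1} hold for this $P$, $Q$, $L_0$, and the conclusion — that \eqref{eq:ester1} is L-ISS with respect to $e^p$ — follows immediately from Theorem~\ref{thm1}. Fourth, I would note in passing the role of the objective \eqref{eq:thm2_a} and of $\lambda_\kappa$: minimizing $\kappa_0$ drives the slack in \eqref{eq:thm2_b} to zero (the statement requires $\kappa_0=0$ at the optimum, making \eqref{eq:thm2_b} the genuine Lyapunov inequality), while the weighted term $\lambda_\kappa(\kappa_1+\kappa_2+\kappa_3)$ is a heuristic that shrinks $\kappa_2,\kappa_3$ and enlarges the admissible $\lc$ in \eqref{eq:thm2_e}; this is not needed for correctness, only for the subsequent line search over $\lc$.

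The only genuinely delicate point is the interplay between the \emph{equality} \eqref{eq:lyap_eqn} demanded in Theorem~\ref{thm1} and the \emph{inequality} produced by the Schur complement of \eqref{eq:thm2_b}. I expect this to be the main obstacle — or rather, the main thing to argue cleanly — and I would resolve it by the observation above: define $\tilde Q := P - (A+L_0C)^\top P(A+L_0C)$, which satisfies $\tilde Q\succeq Q\succ 0$ exactly by \eqref{eq:thm2_b}, so \eqref{eq:lyap_eqn} holds with $\tilde Q$ in place of $Q$; and since $\lambda_{\min}(\tilde Q)\ge\lambda_{\min}(Q)\ge\kappa_1$, the gain condition \eqref{eq:gain_cond} continues to hold with $\tilde Q$. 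Hence the pair $(P,\tilde Q)$ satisfies \eqref{eq:thm1_all} exactly, and Theorem~\ref{thm1} delivers L-ISS of the error dynamics. Everything else is bookkeeping with Schur complements and eigenvalue monotonicity.
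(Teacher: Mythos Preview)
Your proposal is correct and follows essentially the same route as the paper: substitute $K=PL_0$, take the Schur complement of \eqref{eq:thm2_b}, use \eqref{eq:thm2_f}--\eqref{eq:thm2_d} to bound $\lambda_{\min}(Q)$, $\lambda_{\max}(P)$, and $\|P(A+L_0C)\|_{\sf F}$, chain these through \eqref{eq:thm2_e} to recover \eqref{eq:gain_cond}, and then invoke Theorem~\ref{thm1}. Your treatment of the equality-versus-inequality discrepancy in \eqref{eq:lyap_eqn} via the auxiliary $\tilde Q := P - (A+L_0C)^\top P(A+L_0C)$ is in fact more careful than the paper's own argument, which simply asserts that $\kappa_0=0$ renders the equality exact.
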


\begin{proof}
We begin by noting $K:=PL_0$ and substituting $K$ into~\eqref{eq:thm1_all}. This yields
\begin{equation}\label{eq:thm2_pf2}
(PA + KC)^\top P^{-1} (PA + KC) - P + Q = 0,
\end{equation}
\begin{equation}\label{eq:thm2_pf0}
4\lam_{\max}(P)\bar p_\star^2 \lpsi^2 \|C_q\|^2 + 8\bar p_\star\lpsi\|PA+KC\|\|C_q\| \leq \lam_{\min}(Q).
\end{equation}
We relax the equality~\eqref{eq:thm2_pf2} with the inequality
\begin{equation}\label{eq:thm2_pf1}
(PA + KC)^\top P^{-1} (PA + KC) - P + Q \preceq -\kappa_0 I.
\end{equation}
Taking Schur complements of~\eqref{eq:thm2_pf1} yields~\eqref{eq:thm2_b}. Conditions~\eqref{eq:thm2_f}--\eqref{eq:thm2_d} provide bounds on $\lam_{\min}(Q)$, $\lam_{\max}(P)$, and $\|PA+KC\|$: with these bounds $\kappa_1$, $\kappa_2$, and $\kappa_3$, one can explicitly write the constraint~\eqref{eq:thm2_pf0} as~\eqref{eq:thm2_e}. If the optimal value of $\kappa_0=0$, then the equality~\eqref{eq:thm2_pf2} is exact, and the conditions in~\eqref{eq:thm1_all} are satisfied.
\end{proof}

\begin{remark}
Note that, for a fixed $\lc$ and $\bar p_\star$, the problem~\eqref{eq:thm2_all} is convex, and therefore can be solved efficiently using standard convex solvers such as CVX/YALMIP.
\end{remark}

\begin{remark}
	The scalar $\beta_{\kappa}$ is a regularization parameter that should be kept small enough to promote the computation of unique solutions while ensuring the focus of the objective function is to force $\kappa\approx 0$.
\end{remark}

\begin{remark}
One can use the conditions in~\eqref{eq:thm2_all} to perform a grid search for the largest Lipschitz constant and the largest coefficient bound for which a feasible solution to~\eqref{eq:thm1_all} exists. Concretely, by solving the problem
\begin{equation}\label{eq:linesearch}
(\lchat,\hat{\bar{p}}_\star) :=\argmax (\lpsi + \bar{p}_\star) \text{~subject to:~\eqref{eq:thm2_all}}
\end{equation}
and ensuring $\kappa_0 = 0$ for the optimal solution of~\eqref{eq:linesearch} generates $\lchat$ and $\hat{\bar{p}}_\star$ and $L_0$ such that the error dynamics~\eqref{eq:ester1} are L-ISS, as long as these are overestimates; i.e. $\lchat\ge \lpsi$ and $\hat{\bar{p}}_\star\ge \bar p_\star$. Of course, if either one is known \emph{a priori}, then one can perform a line search for the other.
\end{remark} 

\section{Learning via Bayesian optimization}\label{sec:bayesopt}
The previous section described a method for starting with a `safe' initial state estimator - one whose error trajectory does not diverge despite learning, as long as the coefficients $p_t$ of the parametric model are not unbounded. In this section, we provide a data-driven method for updating the model from on-line data using Bayesian optimization with Gaussian process surrogate modeling and the expected improvement acquisition function.

\subsection{Data collection and reward}
Let $T_\ell$ denote the horizon over which measurements and state estimates are collected, and $p_0$ denote the initial guess for the model coefficients. We begin with no data, and an initial observer gain $L_0$. We run the observer~\eqref{eq:obs} over the learning horizon $\{0,1,\ldots, T_\ell -1\}$ and compute the reward
\begin{equation}\label{eq:costfn_comp}
\mathcal J(p_j) := -\left(\sum_{t=0}^{T_\ell - 1} \left\|C\hat x_t - y_t\right\|_{W_1}^2 + \frac{1}{T_\ell}\|p_j\|_{W_2}^2\right)
\end{equation}
with reward weighting matrices $W_1\succ 0$ and $W_2\succeq 0$. The objective of the second term in this reward function is to promote a unique solution by regularizing the learned coefficients; typically $W_2$ is small. From this batch of measurements, we get a data sample $\big(p_0, \mathcal J(p_0)\big)$. With the $j$-th iteration dataset $$\mathcal D_j=\big(p_{0:j-1}, \mathcal J(p_{0:j-1})\big),$$ we use Bayesian optimization to compute a solution to the problem
\begin{equation}
p_\infty = \arg\max_{p\in\mathbb P} \mathcal J(p),
\end{equation}
where $\mathbb P$ is an admissible compact and convex set of parameters. 

Since the gradient of $\costfn$ cannot be evaluated analytically, and the data could have inherent noise, one cannot use standard gradient-based tools to compute $p_\infty$. Instead, we resort to learning a surrogate model of $\costfn$ using Gaussian process (GP) regression~\cite{williams2006gaussian}, and exploiting the statistics of the learned surrogate to inform exploration and exploitation; that is, how to choose $p_{j}$ based on the current dataset $\mathcal D_{j-1}$. In particular, we use the expected-improvement (EI) method for acquiring subsequent $p_j$ values. The combination of GP modeling and EI acquisition is referred to herein as GP-EI, and is the topic of the next subsection.

\subsection{Bayesian Optimization with the GP-EI algorithm}
We formally reiterate our assumptions on $\mathcal J$ and $p_\star$ in the following assumption.
\begin{assumption}
\label{asmp:cost_bo}
The reward function $\costfn$ is continuous with respect to its argument $p$ for every $p\in\mathbb P$. Furthermore, the function has a unique global maximizer $p_\star$ on the set $\mathbb P$.
\end{assumption}
Since $\costfn$ is assumed continuous, we leverage the data at the $j$-th iteration to construct a surrogate GP model of the reward, given by 
\begin{equation}\label{eq:acq_fn}
\hat\costfn_j := \mathsf{GP}\left(\mu(p; \mathcal D_{j}), \sigma(p, p'; \mathcal D_{j})\right),
\end{equation}
where $\mu(\cdot)$ is the predictive mean function, and $\sigma(\cdot,\cdot)$ is the predictive variance function. Typically, the variance is expressed through the use of kernels. A commonly used kernel is the squared exponential (SE) kernel
\begin{equation}
\label{eq:kernel_SE}
\mathcal K_{\sf SE} = \sigma_0^2\exp\left(-\frac{1}{2}r^2\right),
\end{equation}
with 
\[
r^2 \equiv r^2(p,p') = \frac{\|p - p'\|^2}{\sigma_1^2}
\]
and hyperparameters $\sigma_0$ (the output variance) and $\sigma_1$ (the length scale). However, the SE kernel sometimes results in overtly smooth functions; to avoid this, another class of kernels called Mat\'ern kernels have gained popularity, of which the Mat\'ern 5/2 kernel has the form
\begin{equation}
\label{eq:kernel_matern}
\mathcal K_{\sf M52} = \sigma_0\left(1 + \sqrt{5}r + \frac{5}{3}r^2\right)\exp\left(-\sqrt{5}r\right).
\end{equation}
At the $j$-th learning iteration, for a new query sample $p\in\mathbb P$, the GP model predicts the mean and variance of the reward to be
\begin{align*}
\mu(p) &= k_j(p)^\top K_{j-1}^{-1} \mathcal J_{0:j-1}\\
\sigma(p) &= \mathcal K(p,p) - k_j(p) K^{-1}_{j-1} k_j(p)^\top,
\end{align*}
where
\begin{align*}
k_j(p) &= \begin{bmatrix}
\mathcal K(p_0, p) & \mathcal K(p_1, p) & \cdots & \mathcal K(p_{j-1}, p)
\end{bmatrix},\\
K_{j-1} &= \begin{bmatrix}
\mathcal K(p_0, p_{0}) & \cdots & \mathcal K(p_0, p_{j-1}) \\
\vdots & \ddots & \vdots \\
\mathcal K(p_{j-1}, p_{0}) & \cdots & \mathcal K(p_{j-1}, p_{j-1})
\end{bmatrix}.
\end{align*}

In Bayesian optimization, we use the mean and variance of the surrogate model $\hat\costfn_j$ in~\eqref{eq:acq_fn} to construct an acquisition function to inform the selection of a $p_{j}$ that maximizes the likelihood of improving the current best reward 
\begin{equation*}\label{eq:bestrewardk}
\bestrewardk := \max_{p\in\mathbb P} \hat\costfn_j(p).
\end{equation*}
To this end, we define an improvement function 
\[
\mathcal I := \mathcal I(p, j) = \max\{0, \costfn(p) - \bestrewardk\},
\]
whose likelihood function, based on a Gaussian posterior distribution $\mathcal N(\mu,\sigma^2)$, is given by
\[
\mathsf{L}(\mathcal I) = \frac{1}{\sqrt{2\pi} \sigma(p)}\exp\left(-\frac{1}{2}\frac{\left(\mu(p)-\bestrewardk-\mathcal I\right)^2}{\sigma^2(p)}\right).
\]
Taking an expectation of the likelihood function yields
\begin{align*}
\mathsf{EI}(p, j) &= \mathsf{E}(\mathcal I(p,j)) \\
&= \int_0^\infty \frac{\mathcal I}{\sqrt{2\pi} \sigma(p)}\exp\left(-\frac{1}{2}\frac{\left(\mu(p)-\bestrewardk-\mathcal I\right)^2}{\sigma^2(p)}\right)\, d\mathcal I.
\end{align*}
Performing a change of variables and integrating by parts yields
\begin{equation}
\mathsf{EI}(p, j) = \begin{cases} \sigma(p)\gamma(z) + (\mu(p) - \bestrewardk)\Gamma(z), & \text{if $\sigma(p) > 0$}, \\
0 & \text{if $\sigma(p) = 0$}.
\end{cases}\notag
\end{equation}
where
\[
z = \frac{\mu(p) - \bestrewardk}{\sigma(p)},
\]
and $\gamma(\cdot)$, $\Gamma(\cdot)$ are the PDF and the CDF of the mean-zero unit-variance normal distribution, respectively. 

In the $j$-th iteration of learning, we use the data $\mathcal D_j$ to construct the EI acquisition function using the surrogate $\hat\costfn_j$. Subsequently, we sample on $\mathbb P$ and obtain 
\begin{equation}
\label{eq:update_p}
p_j  = \arg\max_{p\in\mathbb P}\; \mathsf{EI}(p, j),
\end{equation}
which serves as the estimate of the model coefficients $p_t$ in~\eqref{eq:obs} until the next learning iteration.
We terminate the learning algorithm when
\begin{equation}
    \label{eq:term_crit}
    \mathsf{EI}(p,j) < \varepsilon_{\sf EI}, \;\forall\, p\in\mathbb P
\end{equation} 
for some small threshold $\varepsilon_{\sf EI}>0$. The terminal set of coefficients is defined $p_\infty$.
\begin{remark}
We do not explicitly write a hyperparameter selection procedure as it is beyond the scope of this work. Standard methods such as log-marginal-likelihood maximization is used for implementation purposes to find good GP variances and length scales. For more details, we refer the interested reader to~\cite[Chapter 5]{williams2006gaussian}.
\end{remark}
\subsection{Regret analysis}
In this subsection, we quantify the performance of the GP-EI learning algorithm in terms of the cumulative regret
\begin{equation}
\label{eq:cregret}
\mathcal R_{N} := \sum_{j=0}^{N} \costfn(p_\star) - \costfn(p_j),
\end{equation}
where $N\in\mathbb N$ denotes the number of learning/training iterations. Specifically, we will demonstrate that the regret associated with GP-EI in our learning-based observer is sublinear, and therefore, $\mathcal R_{N}/{N} \to 0$ as $N\to\infty$.

\begin{proposition}[\cite{Nguyen2017}]
\label{lem:sublinear_regret}
Suppose Assumption~\ref{asmp:cost_bo} holds. Let $\delta\in(0,1)$ and $\varepsilon_{\sf EI}$ be a fixed termination threshold. Assume that $\costfn$ lies in the reproducible kernel Hilbert space $\mathbb H_{\mathcal K}(\mathbb P)$ corresponding to the kernel $\mathcal K(p, p')$, and that the noise corrupting the reward function has zero mean conditioned on the noise history, and is bounded almost surely. Let $\|\costfn\|^2_{\mathcal K}\le B_{\costfn}$ and $\zeta_t := 2B_{\costfn} + 300\chi_t\log^3(t/\delta)$, where $\chi_t$ is a kernel-dependent constant depending on $t$. Then the GP-EI algorithm with termination criterion~\eqref{eq:term_crit} has a probabilistic bound
\begin{equation}\label{regret_bound}
\mathsf{Pr}\left(\mathcal R_N\le \sqrt{N\zeta_N\chi_N}\right) \ge 1-\delta
\end{equation}
on the cumulative regret over $N$ learning iterations. 
\end{proposition}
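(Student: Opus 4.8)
The result is quoted from~\cite{Nguyen2017}, so the plan is to adapt their argument to the reward~\eqref{eq:costfn_comp} and the GP-EI iteration~\eqref{eq:update_p}. The proof rests on three ingredients: a high-probability confidence band that sandwiches $\costfn$ inside the GP credible intervals, elementary two-sided bounds on the EI acquisition function, and a bound on the aggregated posterior variance through the kernel's maximum information gain $\chi_N$. Under Assumption~\ref{asmp:cost_bo} together with the RKHS and bounded-noise hypotheses, I would work throughout on a single event of probability at least $1-\delta$.

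First I would invoke the self-normalized concentration inequality for kernel ridge / GP regression: since $\|\costfn\|_{\mathcal K}^2 \le B_{\costfn}$ and the observation noise is conditionally zero-mean and almost surely bounded, with probability at least $1-\delta$ the bound
\[
|\mu(p;\mathcal D_{j-1}) - \costfn(p)| \le \zeta_j^{1/2}\,\sigma(p;\mathcal D_{j-1})
\]
holds simultaneously for every $p\in\mathbb P$ and every $j\le N$; the particular form $\zeta_j = 2B_{\costfn} + 300\,\chi_j\log^3(j/\delta)$ is exactly what comes out of that inequality after a union bound over iterations. Next, I would exploit the closed form $\mathsf{EI}(p,j) = \sigma(p)\gamma(z) + (\mu(p)-\bestrewardk)\Gamma(z)$ with $z = (\mu(p)-\bestrewardk)/\sigma(p)$ to derive sandwich bounds of the type
\[
c_1\max\{0,\mu(p)-\bestrewardk\} \le \mathsf{EI}(p,j) \le \max\{0,\mu(p)-\bestrewardk\} + c_2\,\sigma(p),
\]
using $\gamma(z)\le 1/\sqrt{2\pi}$ and monotonicity of $z\mapsto z\Gamma(z)+\gamma(z)$. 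Combining these with the confidence band and the fact that $p_j$ maximizes $\mathsf{EI}(\cdot,j)$ (hence $\mathsf{EI}(p_j,j)\ge\mathsf{EI}(p_\star,j)$, which in turn dominates the true improvement at $p_\star$ up to the confidence width), the instantaneous regret $\costfn(p_\star)-\costfn(p_j)$ can be controlled by $\mathsf{EI}(p_j,j)$ plus a constant multiple of $\zeta_j^{1/2}\sigma(p_j;\mathcal D_{j-1})$; the termination threshold $\varepsilon_{\sf EI}$ in~\eqref{eq:term_crit} contributes only an additive constant that is absorbed into the bound.

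Finally I would sum over $j=0,\dots,N$, apply Cauchy--Schwarz to get $\sum_j \zeta_j^{1/2}\sigma(p_j;\mathcal D_{j-1}) \le \big(\sum_j \zeta_j\big)^{1/2}\big(\sum_j \sigma^2(p_j;\mathcal D_{j-1})\big)^{1/2}$, invoke the standard inequality $\sum_j \sigma^2(p_j;\mathcal D_{j-1}) = O(\chi_N)$ tying the sum of posterior variances to the maximum information gain, and use monotonicity of $j\mapsto\zeta_j$ to write $\sum_j\zeta_j\le N\zeta_N$. Collecting constants yields $\mathcal R_N \le \sqrt{N\zeta_N\chi_N}$ on the $(1-\delta)$ event, i.e.~\eqref{regret_bound}.

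The hard part will be the second step: EI is not a monotone surrogate for regret, because when the posterior is overconfident ($\sigma$ too small) the EI can be negligible even though the true regret is not, so one genuinely needs the uniform confidence band of the first step to certify that $\mu$ never underestimates $\costfn$ by more than $\zeta_j^{1/2}\sigma$, and one must interleave this with the stopping rule~\eqref{eq:term_crit} so that halting early does not leave a large block of unaccounted regret. Carrying the exact constants (the factor $300$ and the $\log^3(j/\delta)$ term) through the self-normalized bound is the remaining technical nuisance, but it is routine given~\cite{Nguyen2017}.
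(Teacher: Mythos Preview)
The paper does not prove this proposition at all: it is stated as a citation of~\cite{Nguyen2017} and used as a black box, with no accompanying argument. Your proposal therefore goes well beyond what the paper does; you are essentially reconstructing the proof from the cited reference rather than matching anything in the present manuscript.

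That said, your sketch is a faithful outline of the argument in~\cite{Nguyen2017}: the three ingredients you identify (the self-normalized confidence bound giving $|\mu-\costfn|\le\zeta_j^{1/2}\sigma$ uniformly, the two-sided algebraic bounds on $\mathsf{EI}$ via $z\Gamma(z)+\gamma(z)$, and the information-gain control $\sum_j\sigma^2(p_j)=O(\chi_N)$ followed by Cauchy--Schwarz) are exactly the pillars of that paper's analysis, and the order in which you combine them is correct. The only caveat is that your final line ``collecting constants yields $\mathcal R_N\le\sqrt{N\zeta_N\chi_N}$'' hides absolute constants that the cited reference also suppresses into the definition of $\zeta_N$ and $\chi_N$; you should not expect the inequality to hold with constant exactly~$1$ unless those constants are absorbed as in the statement.
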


Using the results of Theorem~\ref{thm1} and Proposition~\ref{lem:sublinear_regret}, under Assumptions 1--\ref{asmp:cost_bo}, it is easy to write that the estimation error bound satisfies 
\begin{equation}\label{eq:isswgp}
\|e_{t}(\xi_0,\nu)\| \le \beta(t,\|\xi_0\|) + \gamma(\|\nu(t)\|_\infty),
\end{equation}
where $\nu(t)=\left[{e_t^p}^\top, {\epsilon_\phi}^\top\right]^\top$, and $e^{p}_{t}$ satisfies the probabilistic bound of Proposition~\ref{lem:sublinear_regret}.

Except for the case where the learning cost is assumed to be strongly locally convex in a neighborhood of $p^\star$, the regret bound (\ref{regret_bound}) in general, does not allow us to write an explicit bound on the parameter estimation error $e^{p}$. However, due to the continuity of the $\mathcal R_N $ in the variable $e^p$, one can use classical arguments from transformation theory of random variables, e.g. (\cite{Rohatgi76}, p. 68), to conclude about the boundedness of the parameter estimation error $e^p$. This, together with (\ref{eq:isswgp}), allows us to guarantee the safety, in the form of boundedness of the state estimation error $e$, of the observer during the learning iterations.

\begin{remark}
For squared exponential kernels, the constant $\chi_N$ is of the order $(\log(T))^{n_p+1}$, where $n_p$ denotes the dimensionality of $\mathbb P$. For  5/2 Mat\'ern kernels, $\chi_N$ is of the order $T^{\frac{n_p(n_p+1)}{5+n_p(n_p+1)}}\log(T)$. For both, the cumulative regret is sublinear as per the bound in Proposition~\ref{lem:sublinear_regret}.
\end{remark}

\section{Observer gain redesign}\label{sec:redesign}
Upon termination of the Bayesian optimization stage, we have learned the unmodeled components of the dynamical system to some accuracy. Therefore, one can (optionally) leverage this newly acquired model knowledge to update the observer gains from the initial design. 

Let $p_\infty$ denote the value of $p_t$ which minimizes simple regret over all learning iterations.
We propose a redesigned observer of the form
\begin{equation}
\label{eq:obs_redesign}
\hat x_{t+1}= A\hat x_t +Bu_t + p_\infty\psi(\hat q_t) + L(C\hat x_t - y_t),
\end{equation}
where the redesigned observer gain is $L$. The error dynamics of this redesigned observer can be rewritten as
\begin{equation}\label{eq:obs_error_redesign}
e_{t+1} = (A+LC)e_t + p_\infty^\top\Delta \psi + \psi(q_t)^\top e_\infty^p - \epsilon_\phi(q_t),
\end{equation}
where $e_\infty^p = p_\infty - p_\star$ is a constant parameter estimation error.
In this case, the estimated value $p_\infty$ will be used to redesign the observer gains.
The following theorem encapsulates the redesign conditions.
\begin{theorem}\label{thm:redesign}
If there exist matrices $K$, $P=P^\top\succ 0$, and $Q=Q^\top\succ 0$ such that

\begin{align}
\begin{bmatrix}\label{eq:thm_redesign}
-P + Q + \lpsi^2 C_q^\top C_q & \ast & \ast \\
PA + KC & -P & 0  \\
(PA + KC) p_\infty^\top & 0 & p_\infty P p_\infty^\top - I
\end{bmatrix}&\preceq 0,
\end{align}
then the redesigned observer~\eqref{eq:obs_redesign} with gain $L=P^{-1}K$ makes the error dynamics~\eqref{eq:ester1} L-ISS with respect to $[{e^p_\infty}^{\top},\;{\epsilon_{\phi}}^{\top}]^\top$.
\end{theorem}
\begin{proof}
Let $V_t=e_t^\top P e_t$ and replace $PA +KC$ by $P(A+ LC)$ by substituting $L=P^{-1}K$. Then taking successive Schur complements of the inequality in~\eqref{eq:thm_redesign} yields
\begin{equation}
\label{eq:thm_redesign_pf_a}
\begin{bmatrix}
\Omega + Q & \ast \\ P(A+LC)p_\infty^\top & p_\infty P p_\infty^\top - I
\end{bmatrix} + \begin{bmatrix}
\lpsi^2 C_q^\top C_q & 0 \\ 0 & -I
\end{bmatrix}
\preceq 0,
\end{equation}
with
\[
\Omega = (A+LC)^\top P (A+LC) - P.
\]
Taking a congruence transform with $\begin{bmatrix}
e_t^\top & \Delta \psi^\top
\end{bmatrix}^\top$ yields
\begin{align*}
& e_t^\top\Omega e_t + 2e_t^\top (A+LC)^\top Pp_\infty^\top \Delta \psi + \Delta\psi^\top p_\infty P p_\infty^\top \Delta \psi \\
&\hspace{8em} + \lpsi^2 \Delta q^\top \Delta q - \Delta \psi^\top \Delta \psi \le - e_t^\top Q e_t.
\end{align*}
Since $\lpsi$ is a Lipschitz constant of $\psi$, the above inequality implies that
\begin{equation}\label{eq:pfthm3_a}
\tilde V \le - e_t^\top Q e_t,
\end{equation}
where $$\tilde V := e_t^\top\Omega e_t + 2e_t^\top (A+LC)^\top Pp_\infty^\top \Delta \psi + \Delta\psi^\top p_\infty P p_\infty^\top \Delta \psi.$$

Let $V_t = e_t^\top P e_t$. Then, 
\[
\Delta V_t = \tilde V + \sum_{\ell=1}^3 \tilde v_\ell,
\]
where
\begin{align*}
\tilde v_1 &= \left(\psi(q_t)^\top e_\infty^p - \epsilon_\phi(q_t)\right)^\top P \left(\psi(q_t)^\top e_\infty^p - \epsilon_\phi(q_t)\right), \\
\tilde v_2 &= 2 e_t^\top (A+LC)^\top P \left(\psi(q_t)^\top e_\infty^p - \epsilon_\phi(q_t)\right),\\
\tilde v_3 &= 2\Delta\psi^\top p_\infty P \left(\psi(q_t)^\top e_\infty^p - \epsilon_\phi(q_t)\right).
\end{align*}

Following similar arguments as the proof of Theorem~1, we can bound these terms as
\begin{align*}
\tilde v_1 &\le \lambda_{\max}(P)\|\psi(q_t)^\top e_\infty^p - \epsilon_\phi(q_t)\|^2 \\
&\le \lambda_{\max}(P)\left(\bar\psi^2\|e_\infty^p\|^2 + \|\epsilon_\phi\|^2\right),\\
\tilde v_2 &\le 2 \|P(A+LC)\|\|e_t\|\left(\bar\psi\|e_\infty^p\| + \|\epsilon_\phi\|\right),\\
\tilde v_3 &\le 2 p_\infty\|P\|\|C_q\|\lpsi\|e_t\|\left(\bar\psi\|e_\infty^p\| + \|\epsilon_\phi\|\right).
\end{align*}
These bounds, along with~\eqref{eq:pfthm3_a}, yields
\begin{equation*}
\label{perform_quant_1}
\Delta V_t \le - \delta_0 \| e_t\|^2+ \delta_1\|e_\infty^p\|^2 + \delta_2 \|\epsilon_\phi\|^2 + \delta_3\|e_t\|
\end{equation*}
 where $\delta_0 = \tfrac{1}{2}\lam_{\min}(Q)$, $\delta_1 = \lambda_{\max}(P)\bar{\psi}^2$, and $\delta_2=\lam_{\max}(P)$, and $\delta_3 \le 0$ if
\[
\|e_\infty^p\| \le \frac{\lambda_{\min}(Q)}{4\bar\psi\left(\|P(A+LC)\|+p_\infty\lpsi\|P\|\|C_q\| \right)}\|e_t\| - \frac{1}{\bar\psi}\|\epsilon_\phi\|.
\]
The rest of the proof is identical to the proof of Theorem~\ref{thm1}.
\end{proof}

\begin{remark}
Taking advantage of the learned coefficients $p_\infty$, we obtain simpler LMI conditions in~\eqref{eq:thm_redesign}. Furthermore, from the L-ISS definition, we deduce that re-designing the observer with a smaller parameter estimation error $e^p_\infty$ leads to a smaller state estimation error.
\end{remark}

A complete pseudocode is provided in the Appendix.

\section{Numerical Example}\label{sec:ex}
Consider the van der Pol oscillator system
\[
\dot x = \begin{bmatrix}
0 & 1 \\ -1 & 1
\end{bmatrix} x + \begin{bmatrix}
0 \\ -1
\end{bmatrix} x_1^2 x_2, \; y = x_1.
\]
We begin by Euler discretization of the continuous-time dynamics with a sampling time of $\tau=0.01$~s. Comparing with~\eqref{eq:system_description}, we have
\[
A = \begin{bmatrix} 1 & \tau \\ \tau & 1 - \tau\end{bmatrix}, \; B = \begin{bmatrix}
0 \\ -\tau
\end{bmatrix},\; C = \begin{bmatrix}
1 & 0
\end{bmatrix}, \; C_q = I,
\]
and $\phi(q) = q_1^2 q_2$. We know $A$, $B$, $C_q$, and $C$, as per Assumption~\ref{asmp:knowledge_dynamics}. Owing to limit cycle behaviour of the oscillator, Assumption~\ref{asmp:x_bdd} holds. Furthermore, the nonlinearity $\phi$ is locally Lipschitz, with unknown Lipschitz constant. Data is collected by using forward simulations of the oscillator system $T_\ell = 40$~s from an initial condition $[1, 1]^\top$.

We design the initial observer gain by solving the problem~\eqref{eq:linesearch}, assuming $\bar{p}_\star=10^{-2}$. We run a golden-section search for $\lc\in [0, 10]$, which yields an initial observer gain of 
\[
L_0 = \begin{bmatrix}
1.1727 & 7.3679
\end{bmatrix}^\top
\]
for $\lchat=4.332$. The observer initial conditions are set to zero.

\begin{figure}[!ht]
    \centering
    \includegraphics[width=\columnwidth]{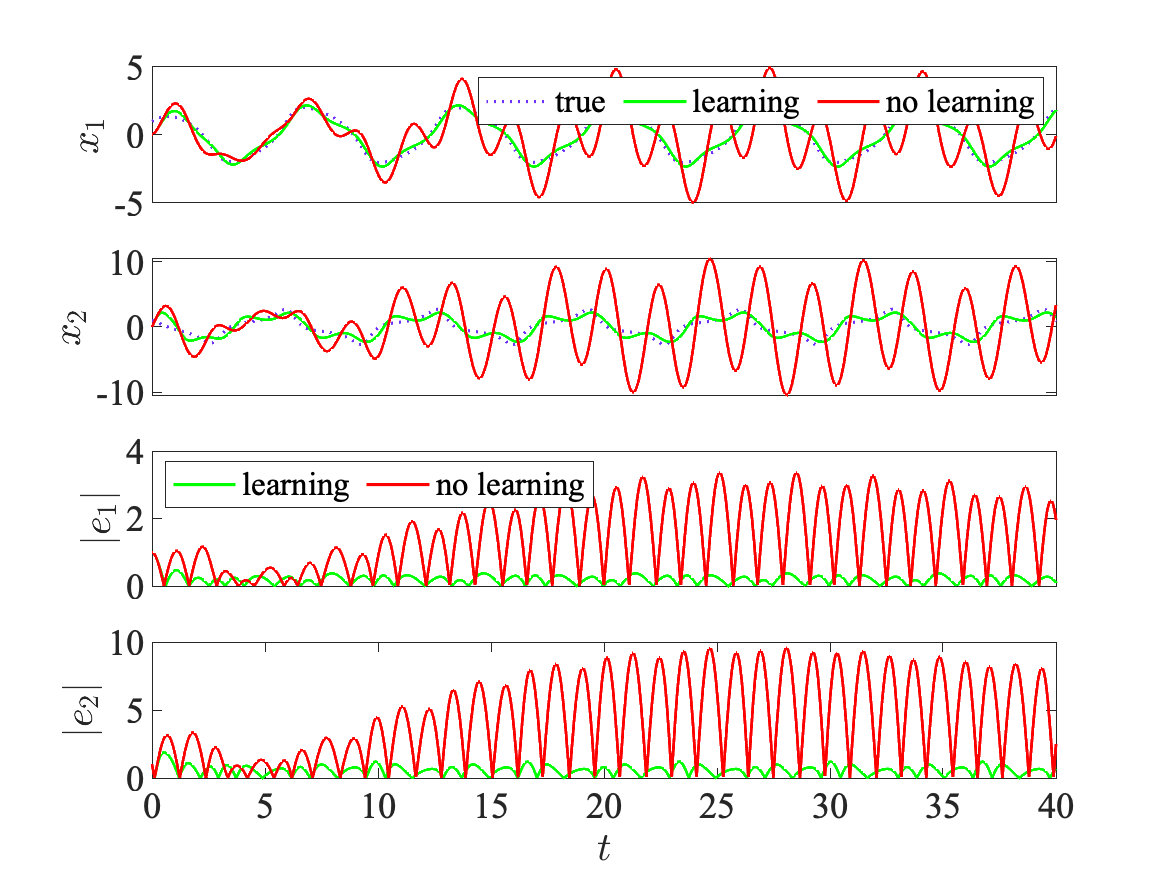}
    \caption{Comparison of state estimates and error norms with and without learning from data.}
    \label{fig:ex1_a}
\end{figure}
For fairness, we choose Legendre polynomials as basis functions and not monomials, which would perfectly fit the nonlinearity $\phi(q)$ with very few coefficients. Our basis functions are
\[
\psi(\hat q) = 10\begin{bmatrix}
(3\hat q_1^2-1)(3\hat q_2^2-1) \\
    (3 \hat q_1^2-1)\hat q_2 \\ \hat q_1(3\hat q_2^2-1) \\
    (5\hat q_1^3-3\hat q_1) \\ (5\hat q_2^3-3\hat q_2)
\end{bmatrix},
\]
where the scaling helps find good coefficients despite the strong bound on $\bar p_\star$. Performing Bayesian optimization with the GP-EI method using a weighting matrix $W_1=200$, $W_2=1$, and termination threshold $\varepsilon_{\sf EI}=0.01$ yields
\[
p_\infty = 10^{-3}\times \begin{bmatrix}
   -0.6077 \\
    8.4930 \\
   -9.2877 \\
    1.8897 \\
    9.8417
\end{bmatrix}
\]
in $N=200$ training iterations, with $1000$ uniformly random samples on $[-\bar p_\star, \bar p_\star]^5$ to select $p_j$ at each $j$ using~\eqref{eq:update_p}. The surrogate GP model is constructed at each time using sklearn in Python or MATLAB's \texttt{fitrgp} function in the Statistics and Machine Learning Toolbox. We use the Mat\'ern ARD-52 kernel and select hyperparameters using BFGS on the log-likelihood function at each training iteration. The effects of model learning is illustrated in Figure~\ref{fig:ex1_a}. Prior to learning, the initial observer design provides bounded error dynamics (red continuous line), but the error is considerably reduced after learning is completed (green continuous line). 
	
\begin{figure}[!ht]
    \centering    \includegraphics[width=\columnwidth]{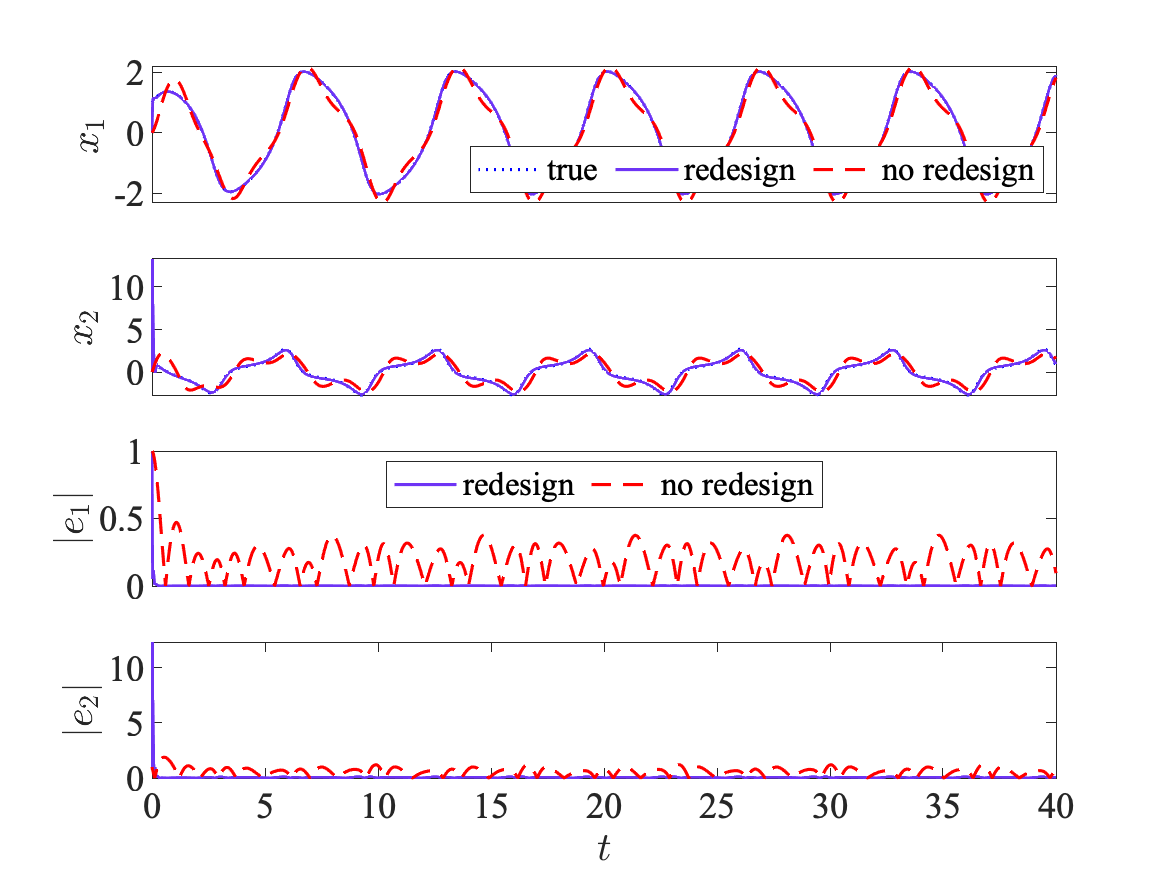}
    \caption{Comparison of state estimates and error norms with and without final redesign of the observer gains.}
    \label{fig:ex1_b}
\end{figure}	

With the learned nonlinearity in a neural approximator/basis expansion form $p_\infty^\top \psi(\hat q)$, we select $\mathbb Q = [-5, 5]^2$ and use the kernelized Lipschitz estimation method in~\cite{9042816} to compute $\lchat=24.537$, with which we can solve~\eqref{eq:thm_redesign} to get
$
L = \begin{bmatrix}
   39.3221 &
  958.7488
\end{bmatrix}^\top
$.
As seen from Figure~\ref{fig:ex1_b}, the redesign results in further reduction of state estimation error.

In order to demonstrate the effectiveness of the proposed algorithm, we compare the performance of our learning-based observer with well-known robust and adaptive observers in the literature. The observers we compare against include: the robust adaptive observer (RA) described in~\cite[Chapter 5]{ioannou2012robust}, the high-gain (HG) observer described in~\cite{khalil2014high}, and the recurrent neural network (RNN) learning-based observer proposed in~\cite{chen2017neural}. In all cases, we make multiple simplifications regarding knowledge of the model dynamics, since these observers failed to converge with completely unknown dynamics. In the first two observers, one requires a `nominal model' for $\phi(q)$, which we assume to be $\hat x_1^2\hat x_2+0.2*\sin(\hat x_2)$ to emulate a bounded disturbance on the model knowledge. For the neural observer, we start with an initial condition $\begin{bmatrix} 0.5 & 0.5 \end{bmatrix}^\top$ which is close to the true system state, because starting from the origin results in the neural approximator diverging. A comparison of the state estimation error in the unmeasured state, that is, $|e_2|$ is illustrated in Fig.~\ref{fig:comp_study}. Clearly, our proposed observer (purple line) outperforms the others despite the simplifications made to help the other observer architectures. The RA observer (amber line) demonstrates good performance, but slow convergence rate, despite extensive hand-tuning - another drawback of this observer is that the observer state-space 8-dimensional and there are multiple hyperparameters and gains that need hand-tuning. The HG observer (green line), expectedly, has high state estimation error initially, although the steady-state estimation norm is smaller than RA and RNN,  albeit larger than our proposed observer. Finally, the RNN observer (magenta line) exhibits significantly worse performance with time, and sometimes the estimation error grows large; this is because the observer estimation error dynamics are uniformly ultimately bounded, but not ISS, so large approximation errors in the recurrent neural network result in poor quality of the state estimates.
\begin{figure}[!ht]
    \centering
    \includegraphics[width=.9\columnwidth]{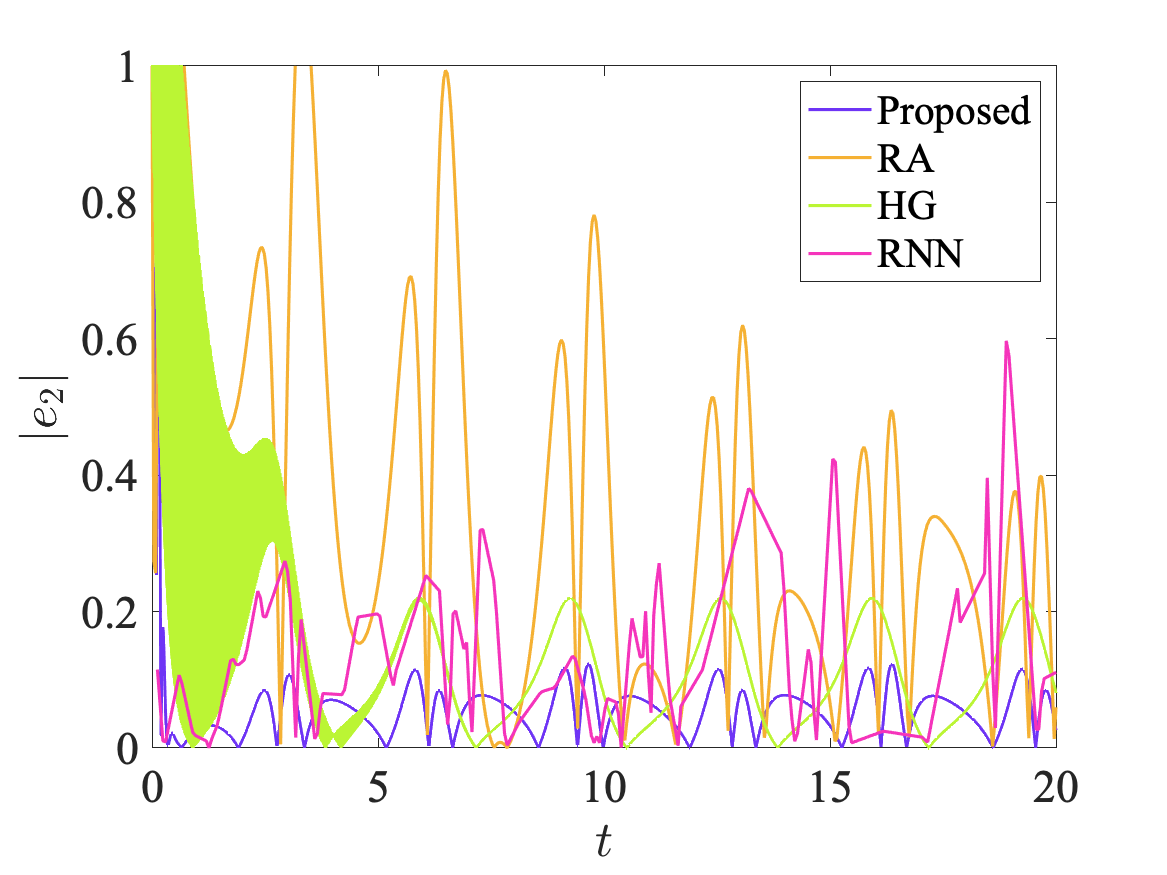}
    \caption{Comparative study of our proposed method with robust and adaptive observers in the literature. RA = Robust adaptive observer~\cite[Chapter 5]{ioannou2012robust}, HG = High-gain observer~\cite{khalil2014high}, RNN = Recurrent neural network observer~\cite{chen2017neural}. RA and HG have knowledge of the nonlinearity $\phi$, and RNN state is initialized near the system's initial state.}
    \label{fig:comp_study}
\end{figure}
\section{Conclusions}\label{sec:conc}
In this paper, we provide a design methodology for constructing state estimators with unmodeled dynamics. We generate an initial observer that is robust to the learning error, in thes sense of ISS guarantees, and use this conservative initial state estimator to iteratively learn the unmodeled dynamics via Bayesian optimization in a purely data-driven manner. Once a suitable estimate of the complete model is achieved, based on an information measure, we redesign the observer using the learned dynamics to reduce conservatism. Furthermore, the proposed modular design, based on a model-based part ensuring the safety of the learning via ISS guarantees, added to a data-driven optimization part, allows us to plug-in any data-driven optimization approach with convergence guarantees, e.g. Bayesian optimization, extremum seekers, etc.  

Further investigations will be conducted to deal with the case of output measurement noise, which can enter the system both at the observer feedback level or at the learning cost function measurement level.

\bibliographystyle{unsrt}
\bibliography{refs.bib}	

\newpage
\appendix
\section*{Appendix: Implementation}
\begin{algorithm}[!ht]
\caption{Bayesian Learning-based Observer}\label{algo:1}
\small
\begin{algorithmic}[1]
\Require Initial conditions $x_0$, $\hat x_0$
\Require System matrices $A$, $B$, $C$
\Require Function argument matrix $C_q$ \Comment {default: $I$}
\Require Basis functions $\psi$
\Require Initial coefficients $p_0$ \Comment{default: 0} 
\Require Upper bound on coefficients $\bar p_\star$ \Comment{default: $10^{-3}$}
\Require Range of $\lchat$ to perform line search
\Require Batch size for learning $T_{\ell}$
\Require Cost function weights $W_1, W_2$ \Comment{p.s.d. matrices}
\Require Kernel $\mathcal K$ for GP regression \Comment{default: Mat\'ern 52} 
\Require Termination threshold $\varepsilon_{\sf EI}$
\Comment{default: $10^{-3}$} 
\Algphase{Initial L-ISS observer design}
\State Select $A$ such that $(A,C)$ is observable
\State $\lchat, L_0\leftarrow$ perform line search~\eqref{eq:linesearch} \Comment{involves solving SDP~\eqref{eq:thm2_all} for $L_0$ corresponding to $\lchat$}
\State Parameterize initial observer~\eqref{eq:obs} with $L_0$ and $p_0$
\Algphase{Safe learning from online data}
\For{$j=0:N-1$}
\State Collect batch data (estimated states $\{\hat x\}$ and measurements $\{y\}$) from system
\State $\mathcal J(p_j)\leftarrow$ compute via~\eqref{eq:costfn_comp}
\State Learn surrogate GP model from data $\{p_j, \mathcal J(p_j)\}$
\State $\mu,\sigma\leftarrow$ use GP to compute on samples drawn from $\mathbb P$
\State $p_{j+1}\leftarrow$ obtain via expected improvement~\eqref{eq:update_p}
\State Re-parameterize observer~\eqref{eq:obs} with $L_0$ and $p_t\leftarrow p_{j+1}$
\State Terminate if condition~\eqref{eq:term_crit} holds
\EndFor
\State $p_\infty\leftarrow$ terminal value of $p_{j+1}$
\Algphase{Observer gain redesign}
\State $L\leftarrow$ redesigned observer gain in~\eqref{eq:thm_redesign}
\State Parameterize re-designed observer~\eqref{eq:obs_redesign} with $L$ and $p_\infty$
\end{algorithmic}
\end{algorithm}
\end{document}